\newcounter{ctr}\setcounter{ctr}{0}
\newtheorem{lemma}{Lemma}
\newtheorem{theorem}{Theorem}
\newtheorem{corollary}{Corollary}
\newcommand{\be}{\begin{eqnarray}}
\newcommand{\ee}{\end{eqnarray}}
\newcommand{\nn}{\nonumber}
\newcommand{\bm}{\boldmath}
\newcommand{\m}{\mbox{\bm $m$}}
\newcommand{\ccc}{{\mbox{\bm $c$}}}
\newcommand{\eee}{\mbox{\bm $e$}}
\newcommand{\0}{\mbox{\bm  $0$}}
\renewcommand\paragraph{\@startsection{paragraph}{4}{\z@}%
    {1.5ex plus .2ex minus .3ex}%
            {-0em}%
                        {\normalsize\bf}}
\begin{document}
\sloppy
\title{Update-Efficient  Regenerating Codes with Minimum Per-Node Storage}

\author{
   \IEEEauthorblockN{
     Yunghsiang S. Han\IEEEauthorrefmark{1},
     Hong-Ta~Pai\IEEEauthorrefmark{2},
     Rong~Zheng\IEEEauthorrefmark{3}
      and
     Pramod~K.~Varshney\IEEEauthorrefmark{4}}
   \IEEEauthorblockA{
     \IEEEauthorrefmark{1}Dep. of Electrical Eng.
    National Taiwan University of Science and Technology,
    Taipei, Taiwan\\
    Email: yshan@mail.ntust.edu.tw}
   \IEEEauthorblockA{
     \IEEEauthorrefmark{2}Dep. of communication Eng.
   National Taipei University,
    Taipei, Taiwan}
   \IEEEauthorblockA{
     \IEEEauthorrefmark{3}Dep. of Computing and Software,
     McMaster University,
    Hamilton, ON, Canada}
   \IEEEauthorblockA{
     \IEEEauthorrefmark{4}Dep. of EECS,
     Syracuse University,
    Syracuse, USA}
 }

\maketitle

\begin{abstract}
Regenerating codes provide an efficient way to recover data at failed nodes in
distributed storage systems. It has been shown that regenerating codes can be designed
to minimize the per-node storage (called MSR) or minimize the communication
overhead for regeneration (called MBR).  In this work, we propose a new
encoding scheme for $[n,d]$ error-correcting MSR codes that generalizes our
earlier work on error-correcting regenerating codes.  We show that by choosing
a suitable diagonal matrix, any generator matrix of the $[n,\alpha]$
Reed-Solomon (RS) code can be integrated into the encoding matrix.  Hence, MSR
codes with the least update complexity can be found.  An efficient decoding
scheme is also proposed that utilizes the $[n,\alpha]$ RS code to perform data
reconstruction. The proposed decoding scheme has better error correction
capability and incurs the least number of node accesses when errors are present.
\end{abstract}


\section{Introduction}
\label{SEC:Intro}

Cloud storage is gaining popularity as an alternative to enterprise storage
where data is stored in virtualized pools of storage typically hosted by
third-party data centers.  Reliability is a key challenge in the design of
distributed storage systems that provide cloud storage. Both crash-stop and
Byzantine failures (as a result of software bugs and malicious attacks) are
likely to be present during data retrieval. A
crash-stop failure makes a storage node unresponsive to access requests. In
contrast, a Byzantine failure responds to access requests with erroneous data.
To achieve better reliability, one common approach is to replicate
data files on multiple storage nodes in a network. Erasure coding is employed to encode the original
data and then the encoded data  is distributed to storage nodes. Typically, more than one
storage nodes need to be accessed to recover the original data. One popular class
of erasure codes is the maximum-distance-separable (MDS) codes.
With $[n,k]$ MDS codes such as Reed-Solomon (RS) codes, $k$ data items are encoded and
then distributed to and stored at $n$ storage nodes. A user or a data collector
can retrieve the original data  by accessing {\it any} $k$ of the storage
nodes, a process referred to as {\it data reconstruction}.

%

Any storage node can fail due to hardware or software damage.  Data stored at
the failed nodes need to be recovered (regenerated)  to remain functional to perform data reconstruction. The
process to recover the stored (encoded) data at a storage node is called {\it
data regeneration}.
{\it Regenerating codes} first introduced in the pioneer works by Dimakis {\it
et al.} in ~\cite{DIM07,DIM10} allow efficient data regeneration.  To
facilitate data regeneration, each storage node stores $\alpha$ symbols and a
total of $d$ surviving nodes are accessed to retrieve $\beta \le \alpha$
symbols from each node.  A trade-off exists between the storage overhead
and the regeneration (repair) bandwidth needed for data regeneration.  Minimum
Storage Regenerating (MSR) codes first minimize the amount of data stored per
node, and then the repair bandwidth, while Minimum Bandwidth Regenerating (MBR)
codes carry out the minimization in the reverse order. There have been many
works that focus on the design of regenerating
codes~\cite{WU07,WU10,CUL09,WU09,RAS09,PAW11,OGG11,RAS11}. Recently, Rashmi
{\it et al.} proposed optimal exact-regenerating codes  that recover 
the  stored data at the failed node exactly (and thus the name
exact-regenerating)~\cite{RAS11}; however, the authors only consider crash-stop
failures of storage nodes. Han {\it et al.} extended Rashmi's work to construct
error-correcting regenerating codes for exact regeneration that can handle
Byzantine failures~\cite{HAN12-INFOCOM}. In~\cite{HAN12-INFOCOM}, the encoding
and decoding algorithms for both MSR and MBR error-correcting codes were also
provided. In~\cite{RAS12}, the code capability and resilience were discussed
for error-correcting regenerating codes.

In addition to bandwidth efficiency and error correction capability, another
desirable feature for regenerating codes is {\it update
complexity}~\cite{RAW11}, defined as the maximum number of encoded symbols that
must be updated while a single data symbol is modified. Low update complexity
is desirable in scenarios where updates are frequent. Clearly, the update
complexity of a regenerating code is determined by the number of non-zero
elements in the row of the encoding matrix with the maximum Hamming weight. 
The smaller the number, the lower the update complexity is. 

One drawback of the decoding algorithms for MSR codes given
in~\cite{HAN12-INFOCOM} is that, when one or more storage nodes have erroneous
data, the decoder needs to access extra data from many storage nodes (at least
$k$ more nodes) for data reconstruction. Furthermore, when one symbol in the
original data is updated, all storage nodes need to update their respective
data. Thus, the MSR and MBR codes in ~\cite{HAN12-INFOCOM} have the maximum
possible update complexity. Both deficiencies are addressed in this paper.
First, we propose a general encoding scheme for MSR codes. As a special case,
least-update-complexity codes are designed. Second, a new decoding algorithm
is presented.  It not only provides better error correction capability but also
incurs low communication overhead when errors occur in the accessed data. 

%

\section{error-correcting MSR Regenerating Codes}
\label{SEC:review}
In this section, we give a brief overview of data regenerating codes and the
MSR code construction presented in \cite{HAN12-INFOCOM}. 
\subsection{Regenerating Codes}
\label{subSEC:RC}
Let $\alpha$ be the number of symbols stored at each storage node and
$\beta\le\alpha$ the number of symbols downloaded from each storage during
regeneration.  To repair the stored data at the failed node, a helper node
accesses $d$ surviving nodes. The design of regenerating codes ensures that the total  regenerating bandwidth be
much less than that of the original data, $B$. A regenerating code must be capable
of reconstructing the original data symbols and regenerating coded data  at a
failed node.   An $[n,k,d]$ regenerating code requires at least $k$ and $d$
surviving nodes to ensure successful data reconstruction and
regeneration~\cite{RAS11}, respectively, where $n$ is the number of storage
nodes and $k\le d\le n-1$.
%

The cut-set bound given in~\cite{WU07,DIM10} provides a constraint on
the  repair bandwidth. By this bound, any regenerating code must satisfy
the following inequality:
\begin{eqnarray}
B\le \sum_{i=0}^{k-1} \min\{\alpha,(d-i)\beta\}~.\label{main-inequality}
\end{eqnarray}
From~\eqref{main-inequality}, $\alpha$ or $\beta$ can be minimized achieving
either the minimum storage requirement or  the minimum repair bandwidth
requirement, but not both. The two extreme points in~\eqref{main-inequality}
are referred to as the minimum storage regeneration (MSR) and minimum bandwidth
regeneration (MBR) points, respectively.  The values of $\alpha$ and $\beta$
for the MSR point can be obtained by first minimizing $\alpha$ and then minimizing
$\beta$:
\begin{eqnarray}
\alpha&=&d-k+1\nn\\
B&=&k(d-k+1)=k\alpha~,\label{NMSR}
\end{eqnarray}
where we normalize $\beta$ as $1$.\footnote{It has been proved that when designing  $[n,k,d]$ MSR for $k/(n+1)\le 1/2$. it
suffices to consider those with $\beta=1$~\cite{RAS11}.}

There are two categories of approaches to regenerate data at a failed node. If
the replacement data is exactly the same as that previously stored at the
failed node, we call it the {\it exact regeneration}. Otherwise, if the
replacement data only guarantees the correctness of data reconstruction and
regeneration properties, it is called {\it functional regeneration}. In
practice, exact regeneration is more desirable since there is no need to inform
each node in the network regarding the replacement. Furthermore, it is easy to
keep the codes systematic via exact regeneration, where partial data can be
retrieved without accessing all $k$ nodes. The codes designed
in~\cite{RAS11,HAN12-INFOCOM} allow exact regeneration.

\subsection{MSR Regenerating Codes With Error Correction Capability}
\label{subSEC:MSR}
Next, we describe the MSR code construction given in~\cite{HAN12-INFOCOM}. In
the rest of the paper, we assume $d = 2\alpha$. 
The information sequence $\m=[m_0,m_1,\ldots, m_{B-1}]$ can be arranged into
an information vector $U=\left[Z_1Z_2\right]$ with size $\alpha\times d$ such that
 $Z_1$ and $Z_2$ are symmetric matrices with
dimension $\alpha\times\alpha$.
 An $[n,d=2\alpha]$ RS code is adopted to construct the
MSR code~\cite{HAN12-INFOCOM}. Let $a$ be a generator of $GF(2^m)$. In the encoding of the MSR code, we have
\begin{eqnarray}
U\cdot G=C,\label{eq:generator}
\end{eqnarray}
where
 {\footnotesize \begin{equation}
\begin{array}{l}
G=\\
\left[\begin{array}{cccc}
1&1&\cdots&1\\
a^0&a^1&\cdots&a^{n-1}\\(a^0)^2&(a^1)^2&\cdots&(a^{n-1})^2\\
&&\vdots&\\
(a^0)^{\alpha-1}&(a^1)^{\alpha-1}&\cdots&(a^{n-1})^{\alpha-1}\\
(a^0)^\alpha 1&(a^1)^\alpha 1&\cdots&(a^{n-1})^\alpha 1\\
(a^0)^\alpha a^0&(a^1)^\alpha a^1&\cdots&(a^{n-1})^\alpha a^{n-1}\\
(a^0)^\alpha(a^0)^2&(a^1)^\alpha(a^1)^2&\cdots&(a^{n-1})^\alpha(a^{n-1})^2\\
&&\vdots&\\
(a^0)^\alpha (a^0)^{\alpha-1}&(a^1)^\alpha(a^1)^{\alpha-1}&\cdots&(a^{n-1})^\alpha(a^{n-1})^{\alpha-1}
\end{array}\right]\\
=\left[\begin{array}{c}
\bar{G}\\
\bar{G}\Delta
\end{array}
\right],
\end{array}\label{MSR-encoding}
\end{equation}}
and $C$  is the codeword vector with dimension $(\alpha\times n)$.  $\bar{G}$
contains the first $\alpha$ rows in $G$ and $\Delta$ is a diagonal matrix with
$(a^0)^\alpha,\ (a^1)^\alpha,\ (a^2)^\alpha,\ldots,\ (a^{n-1})^\alpha$ as
diagonal elements. Note that if the RS  code is over $GF(2^m)$ for $m\ge \lceil
\log_2 n\alpha\rceil$, then it can be shown that
$(a^0)^\alpha,\ (a^1)^\alpha,\ (a^2)^\alpha,\ldots,\ (a^{n-1})^\alpha$ are all
distinct. After encoding, the $i$th column of $C$ is distributed to storage node $i$
for $1\le i\le n$.
%


\section{Encoding Schemes for Error-Correcting MSR Codes}
\label{SEC:MSR-encoding}
RS codes are known to have very efficient decoding algorithms and exhibit good error correction capability.
From \eqref{MSR-encoding} in Section~\ref{subSEC:MSR}, a generator matrix $G$ for MSR codes needs to satisfy:
\begin{enumerate}
\item $G=\left[\begin{array}{c}
\bar G\\
\bar G\Delta\end{array}\right],$ where $\bar G$ contains the first $\alpha$ rows in $G$ and $\Delta$ is a diagonal matrix with distinct elements in the diagonal.
\item $\bar G$ is a generator matrix of the $[n,\alpha]$ RS code and $G$ is a generator matrix of the $[n,d=2\alpha]$ RS code.
\end{enumerate}
Next, we present a sufficient condition for $\bar G$ and $\Delta$
such that $G$ is a generator matrix of an $[n,d]$ RS code.

\begin{theorem}
\label{thm:MSR-encoding}
Let $\bar G$ be a generator matrix of  the $[n,\alpha]$ RS code $C_{\alpha}$ that is generated by the generator polynomial with roots $a^1,a^2,\ldots,a^{n-\alpha}$. Let the diagonal elements of $\Delta$ be $(a^0)^\alpha$, $(a^1)^\alpha$, $\ldots$, $(a^{n-1})^\alpha$, where $m\ge \lceil \log_2 n\rceil$ and $\gcd(2^m-1,\alpha)=1$. Then $G$ is a generator matrix of $[n,d]$ RS code $C_{d}$ that is generated by the generator polynomial with roots $a^1,a^2,\ldots,a^{n-d}$.
\end{theorem}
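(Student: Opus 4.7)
The plan is to sandwich the row space of $G$ between $C_d$ and a subspace of dimension exactly $d=2\alpha$, and then match dimensions. Let $q:=2^m-1$, so $a$ has multiplicative order $q$ and $n\le q$ follows from $m\ge\lceil\log_2 n\rceil$. For any row vector $c=(c_0,\ldots,c_{n-1})$ write $c(x):=\sum_{k=0}^{n-1} c_k x^k$, and let $\delta:=(1,a^\alpha,a^{2\alpha},\ldots,a^{(n-1)\alpha})$, so that a row of $\bar G\Delta$ equals $c\ast\delta$ (componentwise product) for $c$ the corresponding row of $\bar G$.

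I would first establish that the row space of $G$ is contained in $C_d$. Because the defining roots $a^1,\ldots,a^{n-d}$ of the generator polynomial of $C_d$ form a prefix of the defining roots $a^1,\ldots,a^{n-\alpha}$ of that of $C_\alpha$, we have $C_\alpha\subseteq C_d$, and so every row of $\bar G$ already lies in $C_d$. For a row of $\bar G\Delta$, written as $c\ast\delta$ with $c\in C_\alpha$, the identity $(c\ast\delta)(x)=c(a^\alpha x)$ yields $(c\ast\delta)(a^i)=c(a^{i+\alpha})$; for $i=1,\ldots,n-2\alpha$, the exponent $i+\alpha$ lies in $\{\alpha+1,\ldots,n-\alpha\}\subseteq\{1,\ldots,n-\alpha\}$, so this vanishes by $c\in C_\alpha$.

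I would then show $\operatorname{rank}(G)=2\alpha$ by showing the row spaces of $\bar G$ and $\bar G\Delta$ meet only in $0$. The diagonal of $\Delta$ is nonzero ($\gcd(\alpha,q)=1$ in fact ensures the diagonal entries are distinct), so $\Delta$ is invertible and $\bar G\Delta$ has the same rank $\alpha$ as $\bar G$. Suppose now $c\in C_\alpha$ admits a representation $c=c'\ast\delta$ with $c'\in C_\alpha$. The polynomial identity $c'(x)=c(a^{-\alpha}x)$ then forces $c(a^j)=0$ for $j=1-\alpha,\ldots,n-2\alpha$, while $c\in C_\alpha$ directly forces $c(a^j)=0$ for $j=1,\ldots,n-\alpha$. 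Their union is the $n$ consecutive integers $\{1-\alpha,\ldots,n-\alpha\}$, which yield $n$ distinct elements of $GF(2^m)^\times$ because $n\le q$. Since $\deg c(x)<n$, these $n$ roots force $c=0$.

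Combining the two steps, the row space of $G$ is a subspace of $C_d$ of dimension $2\alpha=\dim C_d$, hence equals $C_d$. The main obstacle I anticipate is the trivial-intersection step: one has to translate Hadamard multiplication by $\delta$ into the polynomial dilation $x\mapsto a^\alpha x$, and then keep careful track of which consecutive exponents $j$ appear as forced roots of $c$, using both halves of the hypothesis. Once that identification is in place, the containment step and the final dimension count are routine.
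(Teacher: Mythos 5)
Your proposal is correct and follows essentially the same route as the paper: both show that rows of $\bar G\Delta$ lie in $C_d$ via the identification of Hadamard multiplication by $\delta$ with the dilation $x\mapsto a^\alpha x$, and both establish $\mathrm{rank}(G)=2\alpha$ by showing a nonzero codeword in $C_\alpha\cap C_\alpha\Delta$ would accumulate $n$ distinct consecutive-power roots, contradicting $\deg c(x)\le n-1$. The only cosmetic difference is that you apply the dilation in the inverse direction, obtaining the root window $\{1-\alpha,\ldots,n-\alpha\}$ where the paper obtains $\{1,\ldots,n\}$.
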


\begin{proof}
We need to show that each row of $\bar G\Delta$ is a codeword of $C_{d}$, and
all rows in $G$ are linearly independent. Let $\ccc=(c_0,c_1,\ldots,c_{n-1})$
be any row in $\bar G$. Then the polynomial representation of $\ccc\Delta$ is
\begin{equation}\label{cdelta}
\sum_{i=0}^{n-1}c_i(a^i)^\alpha x^i=\sum_{i=0}^{n-1}c_i(a^\alpha x)^i~.\end{equation}
Since $\ccc\in C_{\alpha}$, $\ccc$ has roots $a^1,a^2,\ldots,a^{n-\alpha}$. Then it is easy to see that~\eqref{cdelta} has roots $a^{-\alpha+1}$, $a^{-\alpha+2}$,$\ldots$, $a^{n-2\alpha}$ that clearly contain $a^1,a^2,\ldots,a^{n-2\alpha}$. Hence, $\ccc\Delta\in C_{d}$.

In order to show that all rows in $G$ are linearly independent, it is sufficient to show that $\ccc\Delta\not\in C_{\alpha}$ for all nonzero $\ccc\in
C_{\alpha}$.
Assume that $\ccc\Delta\in C_{\alpha}$. Then $\sum_{i=0}^{n-1}c_i(a^\alpha
x)^i$ must have roots $a^1,a^2,\ldots,a^{n-\alpha}$. It follows that $c(x)$
must have $a^{\alpha+1},a^{\alpha+2},\ldots,a^{n}$ as roots. Recall that $c(x)$
also has roots $a^1,a^2,\ldots,a^{n-\alpha}$. Since $n-1\ge d=2\alpha$, we have
$n-\alpha\ge \alpha+1$. Hence, $c(x)$ has $n$ distinct roots of
$a^1,a^2,\ldots,a^{n}$. This is impossible since the degree of $c(x)$ is at
most $n-1$. Thus, $\ccc\Delta\not\in C_{\alpha}$. 
\end{proof}

One advantage of the proposed scheme is that it can now operate on a smaller
finite field than that of the scheme in~\cite{HAN12-INFOCOM}.  Another advantage is that one can choose $\bar{G}$ (and
$\Delta$ accordingly) freely as long as it is the generation matrix of an $[n,
\alpha]$ RS code. In particular, as discussed in Section~\ref{SEC:Intro}, to minimize update complexity, it is desirable 
to choose a generator matrix where the row with the maximum Hamming weight has the
least number of nonzero elements. Next, we
present a least-update-complexity generator matrix that satisfies~\eqref{MSR-encoding}.

\begin{corollary}
\label{col:G}
Let  $\Delta$ be the one given in Theorem~\ref{thm:MSR-encoding}. Let $\bar G$
be the generator matrix of  a systematic  $[n,\alpha]$ RS code, namely,
$${\bar G}=\left[D|I\right]$$
where
{\footnotesize  \begin{eqnarray}
D&=&\left[\begin{array}{ccccc}
b_{00}&b_{01}&b_{02}&\cdots&b_{0(n-\alpha-1)}\\
b_{10}&b_{11}&b_{12}&\cdots&b_{1(n-\alpha-1)}\\
b_{20}&b_{21}&b_{22}&\cdots&b_{2(n-\alpha-1)}\\
\vdots&&\vdots&&\vdots\\
b_{(\alpha-1)0}&b_{(\alpha-1)1}&b_{(\alpha-1)2}&\cdots&b_{(\alpha-1)(n-\alpha-1)}
\end{array}\right]~,\label{MSR-G-S}\end{eqnarray}}
$I$ is the $(\alpha\times\alpha)$ identity matrix, and
$$x^{n-\alpha+i}=u_i(x)g(x)+b_i(x)\mbox{ for } 0\le i\le \alpha-1~.$$
Then,  $G=\left[\begin{array}{c}
\bar G\\
\bar G\Delta\end{array}\right]$ is a least-update-complexity generator matrix.
\end{corollary}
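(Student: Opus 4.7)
The plan is to verify that the claimed $G$ satisfies the two bullet points preceding Theorem~\ref{thm:MSR-encoding}, apply Theorem~\ref{thm:MSR-encoding} to get that $G$ generates the $[n,d]$ RS code, and then argue the update-complexity optimality by a minimum-distance (MDS) counting argument. The first step is almost definitional: the matrix $\bar G=[D|I]$ is the standard systematic form obtained from the division $x^{n-\alpha+i}=u_i(x)g(x)+b_i(x)$, so its rows $(b_i(x)-x^{n-\alpha+i})$ are multiples of $g(x)$ and thus codewords of the $[n,\alpha]$ RS code $C_\alpha$ generated by $g(x)$; together with the leading $I$ block they are linearly independent, so $\bar G$ is a legitimate generator matrix of $C_\alpha$. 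Theorem~\ref{thm:MSR-encoding} then immediately gives that $G=\bigl[\begin{smallmatrix}\bar G\\ \bar G\Delta\end{smallmatrix}\bigr]$ is a generator matrix of the $[n,d]$ RS code.

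It remains to establish the least-update-complexity property, which, as noted in Section~\ref{SEC:Intro}, means minimizing the maximum Hamming weight over the rows of $G$. The main idea is a two-sided bound. For the lower bound I would argue that any valid generator matrix $G'$ of this form has row weight at least $n-\alpha+1$: any row of the upper block is a nonzero codeword of $C_\alpha$, an $[n,\alpha]$ MDS code, so by the Singleton bound it has weight $\ge n-\alpha+1$; any row of the lower block equals such a codeword multiplied coordinate-wise by the nonzero diagonal of $\Delta$, which preserves Hamming weight. Hence no generator matrix of the form dictated by the two bullet points can achieve a maximum row weight below $n-\alpha+1$.

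For the matching upper bound I would show that the proposed systematic $\bar G=[D|I]$ attains this value exactly. Each row of $\bar G$ has the form $(b_{i0},\ldots,b_{i(n-\alpha-1)},0,\ldots,0,1,0,\ldots,0)$, a nonzero codeword of $C_\alpha$ with a single $1$ in the identity block. Its weight is $\mathrm{wt}(b_i)+1$, which by the MDS bound is at least $n-\alpha+1$; since the vector $b_i$ has exactly $n-\alpha$ coordinates, this forces $\mathrm{wt}(b_i)=n-\alpha$ (in other words, all parity symbols of an MDS systematic generator are nonzero) and so every row of $\bar G$, and therefore of $\bar G\Delta$, has weight exactly $n-\alpha+1$. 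Combining with the previous paragraph, the maximum row weight of $G$ equals the lower bound $n-\alpha+1$, so $G$ is least-update-complexity.

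The only place where care is needed, and what I expect to be the main (minor) obstacle, is the lower-bound step: one must be explicit that the relevant quantity for update complexity is the maximum row weight (not column weight), that multiplication on the right by the invertible diagonal $\Delta$ preserves row weight, and that the bound $n-\alpha+1$ applies to every row of the lower block by way of $C_\alpha$ rather than only by way of $C_d$ (whose weaker bound $n-d+1=n-2\alpha+1$ would be insufficient). Once these three observations are made explicit, the rest of the argument reduces to invoking the Singleton bound and the already-proven Theorem~\ref{thm:MSR-encoding}.
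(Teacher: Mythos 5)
Your proof is correct and follows essentially the same approach as the paper, whose entire argument is the one-line observation that each row of $\bar G$ is a nonzero codeword of the MDS code $C_\alpha$ attaining the minimum Hamming weight $n-\alpha+1$. Your version simply makes explicit the details the paper leaves implicit (the Singleton lower bound on every admissible row, the fact that right-multiplication by the invertible diagonal $\Delta$ preserves row weight, and that all entries of $D$ are forced to be nonzero), all of which are sound.
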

\begin{proof}
The result holds since each row of $\bar G$ is a nonzero codeword with the minimum Hamming weight $n-\alpha+1$.
\end{proof}

\section{Efficient Decoding Scheme for error-correcting MSR Codes}
\label{SEC:decoding}
Unlike the decoding scheme in~\cite{HAN12-INFOCOM} that uses $[n,d]$ RS
code, we propose to use the subcode of the $[n,d]$ RS code, the
$[n,\alpha=k-1]$ RS code generated by $\bar{G}$, to perform the data
reconstruction. The advantage of using  the $[n,k-1]$ RS code is two-fold.
First,  its error correction capability is higher (namely, it can tolerate
$\lfloor\frac{n-k+1}{2}\rfloor$ instead of $\lfloor\frac{n-d}{2}\rfloor$
errors). Second, it only requires the access of two additional storage nodes
(as opposed to $d-k+2=k$ nodes) for the first error to correct.  

Without loss of generality, we assume that the data collector retrieves encoded
symbols from $k+2v$ ($v\ge 0$) storage nodes, $j_0,j_1,\ldots,j_{k+2v-1}$.  We
also assume that  there are $v$ storage nodes  whose received symbols are
erroneous.  The stored information of the $k+2v$ storage nodes are collected as
the  $k+2v$ columns in $Y_{\alpha\times (k+2v)}$. The $k+2v$ columns of $G$
corresponding to storage nodes $j_0,j_1,\ldots,j_{k+2v-1}$ are denoted as the
columns of $G_{k+2v}$.  First, we discuss data reconstruction when $v=0$. The
decoding procedure is similar to that in~\cite{RAS11}.

\paragraph*{No Error}
In this case, $v=0$ and there is no error in  $Y$.  Then, 
\begin{eqnarray}
Y&=&UG_{k}\nonumber\\
&=&[Z_1Z_2]\left[\begin{array}{c}\bar{G}_{k}\nonumber\\
\bar{G}_{k}\Delta\end{array}\right]\\&=&[Z_1\bar{G}_{k}+Z_2\bar{G}_{k}\Delta]~.\label{UG-no-error}
\end{eqnarray}
Multiplying $\bar{G}_{k}^T$ and $Y$ in \eqref{UG-no-error}, we have~\cite{RAS11},
{\small \begin{eqnarray}
\bar{G}_{k}^TY&=&\bar{G}_{k}^TUG_{k}\nonumber\\
&=&[\bar{G}_{k}^TZ_1\bar{G}_{k}+\bar{G}_{k}^TZ_2\bar{G}_{k}\Delta]\nonumber\\
&=&P+Q\Delta~.\label{PQ-no-error}
\end{eqnarray}}

Since $Z_1$ and $Z_2$  are symmetric, $P$ and $Q$ are symmetric as well. The
$(i,j)$th element of $P+Q\Delta$, $1\le i,j\le k$ and $i\neq j$,  is
\begin{eqnarray}
p_{ij}+q_{ij}a^{(j-1)\alpha}~,\label{pq-ij}
\end{eqnarray}
and the $(j,i)$th element is given by
\begin{eqnarray}
p_{ji}+q_{ji}a^{(i-1)\alpha}~.\label{pq-ji}
\end{eqnarray}
Since $a^{(j-1)\alpha}\neq a^{(i-1)\alpha}$ for all $i\neq j$, $p_{ij}=p_{ji}$,
and $q_{ij}=q_{ji}$, combining \eqref{pq-ij} and \eqref{pq-ji}, the values of
$p_{ij}$ and $q_{ij}$ can be obtained. Note that we only obtain $k-1$ values
for each row of $P$ and $Q$ since no elements in the diagonal of $P$ or $Q$ are
obtained.

To decode $P$, recall that $P=\bar{G}_{k}^TZ_1\bar{G}_{k}$. $P$ can
be treated as a portion of the codeword vector, $\bar{G}_{k}^TZ_1\bar{G}$. By
the construction of $\bar{G}$, it is easy to see that $\bar{G}$ is a generator
matrix of the $[n,k-1]$ RS code. Hence, each row in the matrix
$\bar{G}_{k}^TZ_1\bar{G}$ is a codeword. Since we have known $k-1$ components
in each row of $P$, it is possible to decode $\bar{G}_{k}^TZ_1\bar{G}$ by the
error-and-erasure decoder of the $[n,k-1]$ RS code.\footnote{ The
error-and-erasure decoder of an $[n,k-1]$ RS code can successfully decode a
received vector if $s+2v<n-k+2$, where $s$ is the erasure (no symbol)
positions, $v$ is the number of errors in the received portion of the received
vector, and $n-k+2$ is the minimum Hamming distance of the $[n,k-1]$ RS code.}

 Since one cannot locate any erroneous position from the decoded rows of $P$, the decoded $\alpha$ codewords are accepted as $\bar{G}_{k}^TZ_1\bar{G}$. By collecting the last $\alpha$ columns of $\bar{G}$ as $\bar{G}_\alpha$ to find its inverse (here it is an identity matrix), one can recover $\bar{G}_{k}^TZ_1$ from $\bar{G}_{k}^TZ_1\bar{G}_k$. Note that $\alpha=k-1$.
Since any $\alpha$ rows in $\bar{G}_{k}^T$ are independent and thus invertible, we can pick any $\alpha$ of them to recover $Z_1$.
$Z_2$ can be obtained similarly by $Q$.

\paragraph*{Multiple  Errors}
Before presenting the proposed decoding algorithm, we first prove that a
decoding procedure can always successfully decode $Z_1$ and $Z_2$ if $v\le
\lfloor\frac{n-k+1}{2}\rfloor$ and all storage nodes are accessed. Due to space
limitation, all proofs are omitted in this section.

Assume the storage nodes with errors correspond to the $\ell_0$th,
$\ell_1$th, $\ldots$, $\ell_{v-1}$th columns in the received matrix
$Y_{\alpha\times n}$. Then,
 \begin{eqnarray}
&&\bar{G}^TY_{\alpha\times n}\nonumber\\
&=&\bar{G}^TUG+\bar{G}^TE\nonumber\\
&=&\bar{G}^T[Z_1Z_2]\left[\begin{array}{c}\bar{G}\nonumber\\
\bar{G}\Delta\end{array}\right]+\bar{G}^TE\\&=&[\bar{G}^TZ_1\bar{G}+\bar{G}^TZ_2\bar{G}\Delta]+\bar{G}^TE~,\label{UG-error}
\end{eqnarray}
where
{\small $$E=\left[\0_{\alpha\times (\ell_{0}-1)}|\eee^T_{\ell_0}|\0_{\alpha\times (\ell_{1}-\ell_{0}-1)}|\cdots|\eee^T_{\ell_{v-1}}|\0_{\alpha\times (n-\ell_{v-1})}\right]~.$$}
\begin{lemma}
\label{lemma}
There are at least $n-k+2$ errors in each of the $\ell_0$th, $\ell_1$th, $\ldots$, $\ell_{v-1}$th columns of $\bar{G}^TY_{\alpha\times n}$.
\end{lemma}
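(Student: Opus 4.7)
The plan is to read off the column-wise error vector injected into $\bar{G}^T Y_{\alpha\times n}$ by the Byzantine nodes, recognize it as a codeword of the $[n,\alpha]$ RS code $C_\alpha$ generated by $\bar{G}$, and then invoke the minimum distance of $C_\alpha$. Subtracting the error-free portion in~\eqref{UG-error}, the discrepancy between $\bar{G}^T Y_{\alpha\times n}$ and $\bar{G}^T Z_1\bar{G}+\bar{G}^T Z_2\bar{G}\Delta$ is exactly $\bar{G}^T E$, so the number of corrupted entries in the $\ell_j$th column of $\bar{G}^T Y_{\alpha\times n}$ equals the Hamming weight of the $\ell_j$th column of $\bar{G}^T E$.

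First I would isolate that column. By the block form of $E$ in~\eqref{UG-error}, its $\ell_j$th column is $\eee^T_{\ell_j}\neq \0$ while every other column is zero, so the $\ell_j$th column of $\bar{G}^T E$ is $\bar{G}^T \eee^T_{\ell_j}$ and the bound needs to be established individually at each of the $v$ corrupted positions (uncorrupted columns are unaffected and trivially consistent with the claim). Next I would show that $\bar{G}^T \eee^T_{\ell_j}$ is a \emph{nonzero} codeword of $C_\alpha$: viewed as a linear combination, with coefficients $\eee^T_{\ell_j}$, of the columns of $\bar{G}^T$, each of which is a row of $\bar{G}$ rewritten as a column vector and therefore a codeword of $C_\alpha$, the combination lies in $C_\alpha$; and it is nonzero because $\bar{G}$ has full row rank as a generator matrix, making the map $\uu\mapsto \bar{G}^T\uu$ injective.

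Finally I would invoke that $C_\alpha$ is MDS with minimum distance $n-\alpha+1$, and use $\alpha=k-1$ (which follows from the MSR parameters $\alpha=d-k+1$ and $d=2\alpha$) to conclude that the Hamming weight of $\bar{G}^T \eee^T_{\ell_j}$, and hence the number of errors in the $\ell_j$th column of $\bar{G}^T Y_{\alpha\times n}$, is at least $n-k+2$. The main obstacle is purely book-keeping: tracking how the sparse column structure of $E$ survives left-multiplication by $\bar{G}^T$ and lining up the MDS bound with the statement of the lemma. Once that is in place, no computation is needed; the content of the lemma reduces to the fact that the injective map $\bar{G}^T$ sends any nonzero error vector to a high-weight element of $C_\alpha$.
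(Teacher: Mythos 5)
Your argument is correct: the $\ell_j$th column of $\bar{G}^T E$ is $\bar{G}^T\eee^T_{\ell_j}$, a nonzero codeword of the $[n,\alpha=k-1]$ RS code generated by $\bar{G}$ (nonzero by full column rank of $\bar{G}^T$), so its Hamming weight is at least the minimum distance $n-k+2$. The paper omits this proof for space, but this is evidently the intended argument, consistent with how the lemma and the minimum distance $n-k+2$ are used in the sequel.
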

We next have the main theorem to perform data reconstruction.
\begin{theorem}
\label{thm:main}
Let $\bar{G}^TY_{\alpha\times n}=\tilde{P}+\tilde{Q}\Delta$. Furthermore, let
$\hat{P}$ be the corresponding portion of decoded codeword vector to
$\tilde{P}$ and  $E_P=\hat{P}\oplus \tilde{P}$ be the error pattern vector.
Assume that the data collector accesses all storage nodes and there are $v$,  $1\le
v\le \lfloor\frac{n-k+1}{2}\rfloor$, of them with errors. Then, there are at
least $n-k+2-v$ nonzero elements in $\ell_{j}$th column of $E_P$,  $0\le j\le
v-1$, and at most $v$ nonzero elements in the rest of columns of $E_P$.
\end{theorem}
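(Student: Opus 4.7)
The plan is to track how the error matrix $E$ propagates through the linear transformations and the symmetric pairing used by the decoder. Let $F \bydef \bar{G}^T E$, $P^* \bydef \bar{G}^T Z_1 \bar{G}$, and $Q^* \bydef \bar{G}^T Z_2 \bar{G}$. Both $P^*$ and $Q^*$ are symmetric, so applying the same $(i,j)$--$(j,i)$ pairing from \eqref{pq-ij}--\eqref{pq-ji} to $F$ yields an auxiliary decomposition $F = P_E + Q_E \Delta$, giving $\tilde P = P^* + P_E$ and $\tilde Q = Q^* + Q_E$. Solving the two-by-two system explicitly, for $i\neq j$ one obtains
\[
(P_E)_{ij} \;=\; \frac{\lambda_j F_{ji} - \lambda_i F_{ij}}{\lambda_j - \lambda_i}, \qquad \lambda_i \bydef a^{(i-1)\alpha}.
\]

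The structural facts I feed into this formula come from Lemma~\ref{lemma}: columns of $F$ at non-error positions vanish identically, while each column $F_{\cdot,\ell_j}$ is a nonzero codeword of the $[n,\alpha]$ RS code and therefore has at least $n-k+2$ nonzero entries (equivalently, at most $k-2$ zeros). Two immediate consequences follow. First, for any non-error row index $i$, the formula forces $(P_E)_{ij} = 0$ whenever $j$ is also non-error, so row $i$ of $P_E$ has at most $v$ nonzero entries, all confined to the $v$ error columns. Since $v \le \lfloor (n-k+1)/2 \rfloor$, the error-and-erasure decoder of the $[n,k-1]$ RS code recovers the true codeword $P^*_i$ exactly on every non-error row, so $\hat P$ agrees with $P^*$ there and $E_P$ coincides with $-P_E$ on non-error rows.

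The two halves of the conclusion now follow by a column count on $E_P$. For a non-error column $j$, only rows $i \in \{\ell_0,\ldots,\ell_{v-1}\}$ can contribute a nonzero $(E_P)_{ij}$, because for non-error $i$ both $F_{ij}$ and $F_{ji}$ vanish and decoding is exact; this yields at most $v$ nonzero entries per non-error column. For an error column $\ell_j$, each non-error row $i$ satisfies $F_{\ell_j,i} = 0$, so the formula reduces to $(E_P)_{i,\ell_j} = \lambda_i F_{i,\ell_j}/(\lambda_{\ell_j}-\lambda_i)$, which is nonzero precisely when $F_{i,\ell_j} \neq 0$. Among the $n-v$ non-error row positions of $F_{\cdot,\ell_j}$, at most $k-2$ can be zero (since the entire column of length $n$ has at most $k-2$ zeros), so at least $n-v-(k-2) = n-k+2-v$ are nonzero, establishing the lower bound on the $\ell_j$th column of $E_P$.

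I expect the main obstacle to be the bookkeeping around the forced-symmetric decomposition $F = P_E + Q_E \Delta$ when $F$ itself is not symmetric, together with verifying that the error-and-erasure decoding radius of the $[n,k-1]$ RS code really does cover the regime $v \le \lfloor(n-k+1)/2\rfloor$ in the presence of the diagonal erasure; once those linear-algebraic sanity checks are settled, the remaining counts are routine applications of the MDS weight property.
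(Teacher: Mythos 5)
The paper itself gives no proof of Theorem~\ref{thm:main} (all proofs in that section are omitted for space), so there is nothing to compare against; judged on its own, your argument is correct and uses exactly the ingredients the paper sets up: the pairing equations \eqref{pq-ij}--\eqref{pq-ji} applied to $F=\bar G^T E$, and Lemma~\ref{lemma} (each error column of $F$ is a nonzero codeword of the $[n,k-1]$ RS code, hence has weight at least $n-k+2$). Your explicit formula $(P_E)_{ij}=(\lambda_j F_{ji}-\lambda_i F_{ij})/(\lambda_j-\lambda_i)$ is right, the vanishing of $F$ on non-error columns gives the ``at most $v$'' bound, and the weight count $n-v-(k-2)=n-k+2-v$ over the non-error rows of an error column gives the lower bound; note also that the lower bound does not actually depend on what the decoder does on the error rows, since it is already supplied by the non-error rows. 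The one caveat you correctly flag is real but inherited from the paper: with the diagonal entry erased, exact error-and-erasure decoding of each non-error row needs $1+2v<n-k+2$, i.e.\ $v\le\lfloor (n-k)/2\rfloor$, which is one less than the stated range $\lfloor(n-k+1)/2\rfloor$ when $n-k$ is odd; the paper's own footnote makes the same slightly optimistic count, and only the ``at most $v$'' half of the conclusion depends on it.
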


The above theorem allows us to design a decoding algorithm that can correct up
to $\lfloor\frac{n-k+1}{2}\rfloor$ errors.\footnote{ In constructing
$\tilde{P}$ we only get $n-1$ values (excluding the diagonal). Since the
minimum Hamming distance of an $[n,k-1]$ RS code is $n-k+2$, the
error-and-erasure decoding can only correct up to
$\lfloor\frac{n-1-k+2}{2}\rfloor$ errors.}
In particular, we need to examine the erroneous positions in $\bar{G}_{k+3}^TE$.  Since $1\le v\le
\lfloor\frac{n-k+1}{2}\rfloor$, we have $n-k+2-v\ge
\lfloor\frac{n-k+1}{2}\rfloor+1>v$. Thus, the way to locate all erroneous
columns in $\tilde{P}$ is to find out all columns in $E_P$ where the number of
nonzero elements in them are greater than or equal to
$\lfloor\frac{n-k+1}{2}\rfloor+1$. After we locate all erroneous columns we can
follow a procedure similar to that given in the no error case to recover $Z_1$ from
$\hat{P}$.

The above decoding procedure guarantees to recover $Z_1$ when all $n$ storage
nodes are accessed. However, it is not very efficient in terms of bandwidth usage.
Next, we present a progressive decoding version of the proposed algorithm that
only accesses enough extra nodes when necessary. Before presenting it, we
need the following corollary.
\begin{corollary}
\label{cor}
Consider that one accesses $k+2v$ storage nodes, among which $v$ nodes are
erroneous and  $1\le v\le \lfloor\frac{n-k+1}{2}\rfloor$. There are at least
$v+2$ nonzero elements in $\ell_{J}$th column of $E_P$,  $0\le j\le v-1$, and
at most $v$ among the remaining columns of $E_P$.
\end{corollary}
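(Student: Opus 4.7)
The plan is to mirror the argument behind Theorem~\ref{thm:main} (whose proof is omitted in the paper) under the restricted access to only $k+2v$ storage nodes, replacing the MDS property of $\bar G$ by that of the punctured submatrix $\bar G_{k+2v}$. Because $\bar G$ generates an $[n,\alpha]$ RS code with $\alpha=k-1$, the matrix $\bar G_{k+2v}$ obtained by keeping any $k+2v$ columns of $\bar G$ is the generator matrix of a punctured $[k+2v,\alpha]$ RS code with minimum distance $k+2v-\alpha+1=2v+2$. Writing $E$ for the $\alpha\times(k+2v)$ error matrix, whose only nonzero columns are $\ell_0,\ldots,\ell_{v-1}$, the $\ell_j$th column of $\bar G_{k+2v}^T E$ equals $\bar G_{k+2v}^T \eee_{\ell_j}^T$, a nonzero codeword of this punctured code, and therefore has at least $2v+2$ nonzero entries. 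Since at most $v$ of these can sit in rows indexed by $\ell_0,\ldots,\ell_{v-1}$, at least $v+2$ of them lie in rows whose index does not belong to $\{\ell_0,\ldots,\ell_{v-1}\}$.

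Next I would show that error-and-erasure decoding succeeds on each non-$\ell$ row of $\tilde P$. Every such row is viewed as a received version of a codeword of the $[n,k-1]$ RS code generated by $\bar G$: the $n-k-2v$ unaccessed positions together with the unknown diagonal entry contribute $s=n-k-2v+1$ erasures, while all errors are confined to the $v$ columns $\ell_0,\ldots,\ell_{v-1}$. Since $s+2v=n-k+1<n-k+2$, which is the minimum Hamming distance of the $[n,k-1]$ RS code, the decoder returns the correct codeword, so for each non-$\ell$ row $i$ the entry $(E_P)_{i,j}$ coincides with the error that propagated into $\tilde P_{i,j}$ through the symmetry-based extraction of $\tilde P$ and $\tilde Q$ from $\bar G_{k+2v}^T Y=\tilde P+\tilde Q\Delta$. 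That extraction inverts a $2\times 2$ system whose determinant is $a^{(\ell_j-1)\alpha}-a^{(i-1)\alpha}\neq 0$, by the distinctness of the diagonal elements of $\Delta$ guaranteed in Theorem~\ref{thm:MSR-encoding}, so a nonzero $(\bar G_{k+2v}^T E)_{i,\ell_j}$ produces a nonzero $(E_P)_{i,\ell_j}$. Combined with the count from the previous paragraph, this yields the $v+2$ lower bound on the $\ell_j$th column of $E_P$.

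For any column $m\notin\{\ell_0,\ldots,\ell_{v-1}\}$, the $m$th column of $\bar G_{k+2v}^T E$ is identically zero, so no error propagates into $\tilde P_{i,m}$ for any non-$\ell$ row $i$, and successful decoding forces $(E_P)_{i,m}=0$. Only the $v$ rows indexed by $\ell_0,\ldots,\ell_{v-1}$ can then contribute nonzero entries in column $m$ of $E_P$, which gives the ``at most $v$'' bound. I expect the main obstacle to be the bookkeeping in the middle step: one must verify that the symmetric reconstruction of $\tilde P$ and $\tilde Q$ carries each individual entry of $\bar G_{k+2v}^T E$ into a correspondingly nonzero perturbation of $\tilde P$ without accidental cancellation, so that the MDS weight count of $2v+2$ on $\bar G_{k+2v}^T E$ translates faithfully into the advertised weight bound on $E_P$. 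This is exactly the analogue of the step that had to be handled in the omitted proof of Theorem~\ref{thm:main}.
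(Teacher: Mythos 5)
The paper omits its own proof of this corollary (together with those of Lemma~\ref{lemma} and Theorem~\ref{thm:main}), but your argument is correct and is precisely the one the paper's scaffolding implies: the punctured $[k+2v,\,k-1]$ RS code generated by $\bar G_{k+2v}$ has minimum distance $2v+2$, so each erroneous column of $\bar G_{k+2v}^{T}E$ has weight at least $2v+2$, of which at least $v+2$ fall in clean rows; those rows decode correctly because $s+2v=(n-k-2v+1)+2v=n-k+1<n-k+2$, and clean rows see no corruption outside the $\ell$-columns, giving the ``at most $v$'' bound. The cancellation issue you flag as the main obstacle does resolve cleanly: for a clean row $i$ the companion entry $e_{\ell_j,i}$ of $\bar G_{k+2v}^{T}E$ vanishes, so the $2\times 2$ inversion yields $\delta p+\delta q\,a^{(i'-1)\alpha}=0$ and $\delta p+\delta q\,a^{(\ell_j'-1)\alpha}=e_{i,\ell_j}\neq 0$, which forces both $\delta p$ and $\delta q$ to be nonzero and hence $(E_P)_{i,\ell_j}\neq 0$.
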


Based on Corollary~\ref{cor}, we can design a progressive decoding
algorithm~\cite{HAN12} that retrieve extra data from remaining storage nodes
when necessary. To handle Byzantine fault tolerance, it is necessary to perform
integrity check after the original data is reconstructed.  Two verification
mechanisms have been suggested in~\cite{HAN12-INFOCOM}: cyclic redundancy check
(CRC) and cryptographic hash function. Both mechanisms introduce redundancy to
the original data before they are encoded and are suitable to be used in
combination with the decoding algorithm.

The progressive decoding algorithm starts from accessing $k$ storage nodes. 
Error-and-erasure decoding succeeds only when there is no
error. If the integrity check passes, then the data collector recovers the
original data. If the decoding procedure  fails or the integrity check fails,
then the data collector retrieves  two more  blocks of data from the remaining storage
nodes. Since the data collector has $k+2$ blocks of  data, the error-and-erasure
decoding can correctly recover the original data if there is only one erroneous
storage node among the $k+1$ nodes accessed. If the integrity check passes,
then the data collector recovers the original data. If the decoding procedure
fails or the integrity check fails, then the data collector retrieves two more
blocks of  data from the remaining storage nodes. The data collector repeats the same
procedure until it recovers the original data or runs out of the storage nodes.
The detailed decoding procedure is summarized in
Algorithm~\ref{algo:reconstruction-MSR}.

\begin{algorithm}[h]
\Begin {
$v=0$; $j=k$;\\
The data collector randomly chooses $k$ storage nodes and retrieves encoded data,
$Y_{\alpha\times j}$;\\
\While {$v \le \lfloor\frac{n-k+1}{2}\rfloor$} {
Collect the $j$ columns of $\bar G$ corresponding to accessed storage nodes as  $\bar G_{j}$;\\
Calculate $\bar G_{j}^TY_{\alpha\times j}$;\\
Construct $\tilde{P}$  and  $\tilde{Q}$ by using \eqref{pq-ij} and \eqref{pq-ji};\\
Perform progressive error-and-erasure decoding on each row in $\tilde{P}$  to obtain $\hat{P}$;\\
Locate  erroneous columns in $\hat{P}$  by searching for columns of them with at least $v+2$ errors;
 assume that $\ell_e$ columns found in the previous action;\\
Locate columns in $\hat{P}$ with at most $v$ errors; assume that $\ell_c$ columns found in the previous action;\\
\If{($\ell_e=v$ and $\ell_c=k+v$)} {
Copy  the $\ell_e$ erronous columns of $\hat{P}$ to their corresponding rows to make $\hat{P}$  a symmetric matrix;\\
Collect any $\alpha$ columns in the above $\ell_c$ columns of $\hat{P}$ as $\hat{P}_\alpha$ and find its corresponding $\bar{G}_\alpha$;\\
Multiply  the inverse of $\bar{G}_\alpha$ to $\hat{P}_\alpha$ to recover $\bar{G}_{j}^TZ_1$;\\
Recover $Z_1$ by the inverse of any $\alpha$ rows of $\bar{G}_{j}^T$;\\
Recover $Z_2$ from $\tilde{Q}$ by the same procedure; Recover $\tilde{\m}$ from $Z_1$ and $Z_2$;\\
\If{ integrity-check($\tilde{\m}$) = SUCCESS} {
\Return $\tilde{\m}$;
} }
$j \leftarrow j+2$;\\
Retrieve $2$ more encoded data from remaining storage nodes and merge them into $Y_{\alpha\times j}$; $v\leftarrow v+1$;

}
\Return FAIL;
}
\caption{Decoding of MSR Codes Based on  $(n,k-1)$ RS Code for Data Reconstruction}
\label{algo:reconstruction-MSR}
\end{algorithm}

\begin{figure}
\centering
\includegraphics[width=8cm]{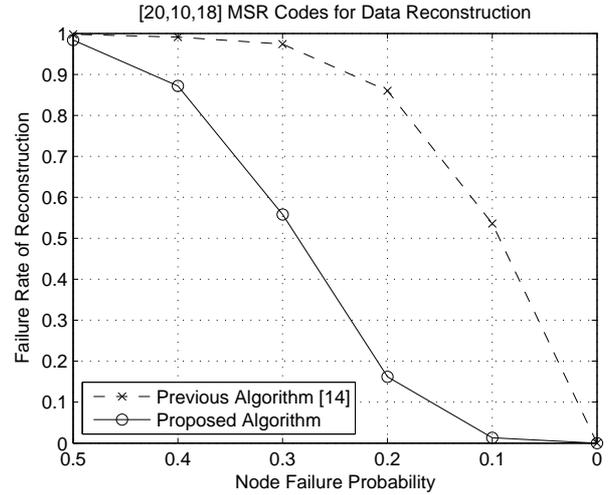}
\caption{Failure-rate comparison between the  previous algorithm in~\cite{HAN12-INFOCOM} and the proposed algorithm for $[20,10,18]$ MSR code} \label{fig:fig1}
\end{figure}


The proposed data reconstruction algorithm for MSR codes is evaluated by Monte Carlo simulations. It is compared with the previous data reconstruction algorithms in~\cite{HAN12-INFOCOM}. Each
data point is generated from $10^3$ simulation results. Storage nodes may fail
arbitrarily with the Byzantine failure probability ranging from $0$ to $0.5$. $[n,k,d]$ and $m$ are chosen to be $[20,10,18]$ and $5$ respectively. Figure~\ref{fig:fig1} shows that the proposed algorithm can successfully reconstruct the data with much higher probability than the  one presented in~\cite{HAN12-INFOCOM} at the same node failure probability. For example, at the node failure probability of $0.1$, data for about $1$ percent of node failure patterns cannot be reconstructed using the proposed algorithm. On the other hand, data for over $50$ percents of node failure patterns cannot be reconstructed using the previous algorithm in~\cite{HAN12-INFOCOM}. The advantage of the proposed algorithm is also overwhelming in the average number of accessed nodes for data reconstruction. Due to space limitation, the simulation results are omitted.

\section{Conclusion}
\label{SEC:conclude}
In this work we proposed a new encoding scheme for the $[n,2\alpha]$
error-correcting MSR codes from the generator matrix of any $[n, \alpha]$ RS
codes. It generalizes the previously proposed MSR codes in~\cite{HAN12-INFOCOM}
and has several salient advantages. It allows the construction of
least-update-complexity codes with a properly chosen systematic generator
matrix. More importantly, the decoding scheme leads to an efficient decoding
scheme that can tolerate more errors at the storage nodes, and access
additional storage nodes only when necessary. A progressive decoding scheme was
thereby devised with low communication overhead. 

Possible future work includes extension of the encoding and decoding schemes to
MBR points, and the study of encoding schemes with optimal update complexity
and good regenerating capability. 
\section*{Acknowledgment}
This work was supported in part by  CASE: The Center for Advanced Systems and Engineering, a NYSTAR center for advanced technology at Syracuse University; the National Science of Council (NSC) of
Taiwan under grants no. 99-2221-E-011-158-MY3 and NSC 101-2221-E-011-069-MY3;
US National Science Foundation under grant no. CNS-1117560 and McMaster
University new faculty startup fund. 
\bibliographystyle{IEEEtran}
\bibliography{IEEEabrv,network,nfs}

\end{document}